\newtheorem{theorem}{Theorem}
\newtheorem{lemma}{Lemma}
\begin{document}

\title{Emergent Gauge Symmetries and Quantum Operations}

\author{A.P. \surname{Balachandran}}
\email{balachandran38@gmail.com}
\affiliation{Department of Physics, Syracuse University, Syracuse, New York 13244-1130, USA}

\author{I.M. Burbano}
\email[]{im.burbano10@uniandes.edu.co}
\affiliation{Departamento de F\'{i}sica, Universidad de los Andes,  A.A. 4976-12340, Bogot\'a, Colombia}
\affiliation{ICTP South American Institute for Fundamental Research\\
Instituto de F\'isica Te\'orica UNESP - Universidade Estadual Paulista
Rua Dr. Bento T. Ferraz 271, 01140-070, S\~ao Paulo, SP, Brasil}

\author{A.F. Reyes-Lega}
\email[]{anreyes@uniandes.edu.co}
\affiliation{Departamento de F\'{i}sica, Universidad de los Andes,  A.A. 4976-12340, Bogot\'a, Colombia}

\author{S. Tabban}
\email[]{sm.tabban@uniandes.edu.co}
\affiliation{Departamento de F\'{i}sica, Universidad de los Andes,  A.A. 4976-12340, Bogot\'a, Colombia}

\begin{abstract}
The  algebraic approach to quantum physics  emphasizes the role played by the structure  of the algebra of observables and its relation to the space of states. An important feature of this point of view is that subsystems can be described by subalgebras, with partial trace being replaced by the more general notion of restriction to a subalgebra.
This, in turn, has recently led to applications to the study of entanglement in systems of identical particles. In the course of  those investigations on entanglement and particle identity, an emergent gauge symmetry has been found by Balachandran, de Queiroz and Vaidya. In this letter we establish a novel connection  between that gauge symmetry, entropy production and quantum operations. Thus, let \textbf{A} be a system described by a finite dimensional observable algebra and $\omega$ a mixed faithful  state. Using the Gelfand-Naimark-Segal (GNS) representation we construct a canonical purification of $\omega$, allowing us to embed \textbf{A} into a larger system \textbf{C}. Using Tomita-Takasaki theory, we
obtain a subsystem decomposition of \textbf{C} into subsystems  \textbf{A} and \textbf{B}, without making use of any tensor product structure. We identify a  group of transformations that acts as a gauge group on \textbf{A} while at the same time giving rise to entropy increasing quantum operations on \textbf{C}. We provide physical means to simulate this gauge symmetry/quantum operation duality.
\end{abstract}

\pacs{03.67.Mn, 02.30.Tb, 03.65.Ud, 89.70.Cf}
% 03.67.Mn: Entanglement measures, witnesses, and other characterizations
% 02.30.Tb: Operator Theory
% 03.65.Ud: Entanglement and quantum nonlocality
% 89.70.Cf: Entropy and other measures of information
\maketitle

\emph{Introduction.}---The language of operator algebras has been shown to reveal the fundamental mathematical structure of quantum physics \cite{Haag1996,Bratteli1997,Moretti2012,Landsman2017}. In it,
the emphasis of the theory is placed on the abstract structures underlying the physical notions of \emph{observables} (described in terms of algebras) and \emph{states} (described in terms of positive linear functionals), as well as  on the relations between them.
 The understanding of quantum theory in such general terms has proven useful to the development of areas such as statistical mechanics \cite{Bratteli1997}, quantum field theory \cite{Haag1996}, and information theory \cite{OhyaPetz1993,Maassen2010}. Among the most important tools used in these developments are the Gelfand-Naimark-Segal (GNS) construction and the \emph{modular theory} of Tomita and Takesaki.
 The former provides a way of constructing a Hilbert space representation of the algebra of observables, once a state has been chosen. In the algebraic approach to quantum physics, Hilbert space is not considered as a priori given, and is thus  an emergent concept. This has important consequences. In statistical mechanics, for example, a better understanding of symmetry breaking is gained from the study of inequivalent Hilbert space representations~\cite{Landsman2017}. The foundational results of Doplicher, Haag and Roberts  on the structure of superselection sectors in quantum field theory~\cite{Doplicher1969,Doplicher1969a,Doplicher1971,Doplicher1974}  also rely on the same general principle.
 On the other hand, modular theory is a mathematical tool for probing the structure of the so called von Neumann algebras. Physically, it has found applications relating equilibrium states to dynamics \cite{Bratteli1997}, localization in algebraic quantum field theory \cite{Borchers2000}, and the GNS approach to thermo-field theory \cite{Ojima1981}. There are, moreover, hints suggesting that it could provide a solution to the problem of time in quantum gravity~\cite{Connes1994}.

The algebraic approach was used in \cite{Benatti2012,Balachandran2013, Balachandran2013a,Benatti2014a} to formulate a theory of entanglement based on the transparent idea of restrictions to subsystems rather than relying on partial traces. This had the advantage of being immediately applicable to problems for which the partial trace  approach failed, such as in the study of entanglement properties of identical particles. The assignment of an entropy to a given state was done there  by constructing a density operator on the GNS Hilbert space. As first pointed out by Sorkin~\cite{Balachandran2013a},  the decomposition of this space into irreducible subrepresentations is not  unique, so that different decompositions may yield density operators with different entropies. This was explored in \cite{Balachandran2013e,Balachandran2013d}, where such an ambiguity was traced back to an action of a unitary group giving rise to an \emph{emergent gauge symmetry}.

In this letter we establish a fundamental bridge between the theory of gauge fields and quantum information theory
through a novel implementation of this emergent gauge symmetry in terms of quantum operations.
Making use of the canonical purification afforded by the GNS construction, we  will enlarge the algebra of observables, obtaining an emergent environment on which the gauge group acts non-trivially. Making use of modular theory,  we will be able to reinterpret this gauge symmetry in terms of a concrete quantum operation in the purification environment.
Due to the robustness of the modular theory,  our results are valid for a wide class of systems, including any quantum system with a finite number of degrees of freedom, quantum theory on multiply connected configuration spaces and quantum fields on different types of backgrounds. For the sake of clarity here we will restrict to the case of finite dimensional algebras, as these are simple enough from a computational point of view  to allow us to describe  the relationship between gauge symmetries and quantum operations in an explicit fashion. However, the facts we establish are derived using tools that are immediately applicable to more general situations.

\emph{The algebraic approach.}---
In the algebraic approach to quantum physics, the description of a quantum system is given in terms of an abstract algebra $\mathcal{A}$. Basic properties, like commutation relations, are encoded in the structure of the algebra. States, in this approach, are described in terms of linear functionals $\omega: \mathcal{A}\rightarrow \mathbb C$.
Important defining properties of such algebras and states can be justified from a physical point of view, e.g.,  by looking at the basic example where the algebra is the space of bounded linear operators on a Hilbert space, $\mathcal A=\mathcal L(\mathcal H)$. This leads, in the abstract setting, to a formulation
of the theory in  terms of  a $C^*$-algebra, i.e.,  a triple $(\mathcal A, \|\cdot\|,*)$ consisting of a complex algebra $\mathcal A$ with an involution $``*"$ (playing the role of the adjoint) and a norm $\|\cdot\|$ fulfilling the $C^*$-property $\|aa^*\|=\|a\|^2$. A state $\omega$ on $\mathcal A$ is then defined as a normalized positive linear functional $\omega: \mathcal A \rightarrow \mathbb C$. If $a=a^*\in \mathcal A$ is an observable, $\omega(a)$ is the expectation value of $a$ in the state $\omega$.
A readable exposition of this point of view can be found in
\cite{Balachandran2013,Balachandran2013a,Reyes-Lega2016}.

From now on, let $\mathcal A$ be a finite-dimensional $C^*$-algebra with a faithful state $\omega$, that is, $\omega(a^*a)=0$ implies $a=0$ for all $a\in \mathcal A$.
In \emph{this} particular case the GNS construction allows one to construct a Hilbert space representation of the algebra in a very simple way. Note that $\mathcal A$ has a vector space structure.  Whenever we want to consider an element $a\in \mathcal A$ as a \emph{vector},  we will denote it by the ket $|a\rangle$. We can equip $\mathcal A$  with an inner product given by $\langle a |b\rangle = \omega(a^*b)$. The GNS representation is a representation of the algebra $\mathcal A$ on the GNS Hilbert space $\mathcal H_\omega\equiv (\mathcal A, \langle \cdot |\cdot\rangle )$. It is a $*$-homomorphism $\pi_\omega: \mathcal A  \rightarrow  \mathcal L(\mathcal H_\omega)$ which is naturally induced by the multiplication in $\mathcal A$, as follows.  Each element $a\in\mathcal A$ is mapped to an operator $\pi_\omega(a)\in\mathcal L(\mathcal H_\omega)$, defined by
$\pi_\omega(a)|b\rangle := |ab\rangle$.
The structure theorem of finite dimensional $C^*$-algebras~\cite{Takesaki2002} guarantees the existence of  unique $n_1,\dots,n_N\in\mathbb{N}^+$ such that $\mathcal{A}=\bigoplus_{r=1}^N\mathcal{A}_r$, where $\mathcal{A}_r\cong M_{n_r}(\mathbb{C})$, $1\leq r\leq N$.  We then have $\dim\mathcal{A}=\sum_{r=1}^N n_r^2$. Let $\mathbbm{1}_{\mathcal{A}_r}\in\mathcal{A}$ be the orthogonal projection onto $\mathcal{A}_r$ and $P^r:=\pi_\omega (\mathbbm{1}_{\mathcal{A}_r})$. This induces a decomposition of the GNS space into  subrepresentations $\mathcal{H}_\omega =\bigoplus_{r=1}^N\mathcal{H} ^r$, where $\mathcal{H} ^r:=P^r\mathcal{H}_\omega$.

In order to focus on the physical features of our proposal, we will denote
the system with observable algebra $\mathcal A$ as system $\mathbf{A}$
and will regard the GNS space $\mathcal H_\omega$ as describing an emergent system, denoted system $\mathbf{C}$.  Its observable algebra will be defined as
$\mathcal F:= \mathcal L({\mathcal{H}_\omega})$.

One can naturally identify $\mathcal{A}$ as a hermitian subalgebra of $\mathcal{F}$. Indeed, for all $a\in\mathcal{A}$ we have $\pi_\omega(a)\in\mathcal{F}$.
% where  $\pi_\omega(a)\ket{b}=\ket{ab}$, as defined in (\ref{eq:GNS-pi}).
As $\pi_\omega$ is a $*$-representation,   we  have that the $\pi_\omega(a)$'s satisfy the same algebraic relations as the $a$'s. Moreover, $\omega$ being faithful implies that $\pi_\omega(a)\neq\pi_\omega(b)$ if $a\neq b$.  This entitles us to identify $\mathcal{A}$ and $\pi_\omega(\mathcal{A})\subseteq\mathcal{F}$. Physically, since systems are completely described by their observable algebras, this means that $\mathbf{A}$ can be identified as a subsystem of the emergent system $\mathbf{C}$. The embedding $\mathbf{A} \hookrightarrow \mathbf{C}$ can be further interpreted as a purification. Indeed, consider the  state vector $|\Omega \rangle:=|\mathbbm{1}_{\mathcal A}\rangle\in\mathcal{H}_\omega$.
We then have $\pi_\omega(a)|\Omega\rangle=|a\rangle$, so that
\begin{equation}\label{eq:GNS-purification}
\langle \Omega | \pi_\omega (a) |\Omega\rangle = \langle \mathbbm{1}_{\mathcal A} |a\rangle =\omega (a).
\end{equation}
This identity means that when we restrict the \emph{pure} state $|\Omega\rangle$ from the full algebra $\mathcal F$ to the subalgebra $\pi_\omega(\mathcal A)\cong \mathcal A$ we obtain the original state $\omega$ on $\mathcal A$. As restriction of a state to a subalgebra is a generalization of partial trace~\cite{Balachandran2013}, we conclude that $|\Omega\rangle$ provides a purification of $\omega$. This begs the question, which algebra can be called the subsystem $\mathbf{B}$ of $\mathbf{C}$ complementary to $\mathbf{A}$? Notice in particular that nowhere in this construction are we relating the notion of a subsystem decomposition to an eventual tensor product structure of system $\mathbf{C}$. The answer to the above question will be given by the modular theory to be described below.

\emph{Tomita-Takesaki (modular) theory.}---Two important properties obeyed by the vector state $|\Omega\rangle$ defined above are: (i) it is a \emph{cyclic vector} for $\mathcal H_\omega$, meaning that the span of all vectors of the form $\pi_\omega(a)|\Omega\rangle$ is $\mathcal H_\omega$ (upon completion, this is a general feature of the GNS construction, valid for general choices of $\mathcal A$ and $\omega$) and (ii) it is a \emph{separating vector} with respect to $\pi_\omega(\mathcal A)$, in the sense that $\pi_\omega(a)|\Omega\rangle=0$ implies $\pi_\omega(a)=0$. This is a direct consequence of our choice of the state $\omega$ as faithful.
% with $\lambda_i>0$ for all $i$.
The Tomita-Takesaki theory applies to any von Neumann algebra (a special type of $C^*$-algebra that includes finite matrix algebras~\cite{Takesaki2002}) acting as an algebra of operators on a Hilbert space which has a cyclic and separating vector. We can thus apply it to $\pi_\omega(\mathcal A)$.
Define an \emph{antilinear} operator $S$ on $\mathcal{H}_\omega$ by $S|a\rangle=|a^*\rangle$.
 Consider its polar decomposition $S=J\Delta^{1/2}$, where  $\Delta=S^*S$ is a positive hermitian operator  called the \emph{modular operator}. The antiunitary $J$ resulting from the polar decomposition is called the \emph{modular conjugation}, and it satisfies $J=J^*$ as well as $J^2=\textrm{id}$. The Tomita-Takesaki theorem~\cite{Haag1996} states that
\begin{align}
\label{eq:TT-theorem}
\begin{split}
    J\pi_\omega(\mathcal{A})J=&\,\pi_\omega(\mathcal{A})',\\
    \Delta^{it}\pi_\omega(\mathcal{A})\Delta^{-it}=&\,\pi_\omega(\mathcal{A}),
\end{split}
\end{align}
for all $t\in\mathbb{R}$. Here $\pi_\omega(\mathcal{A})'$ is the \emph{commutant} of $\pi_\omega (\mathcal{A})$, that is, the set of all operators $B\in\mathcal L(\mathcal{H}_\omega)$ such that $[B,\pi_\omega(a)]=0$ for all $a\in\mathcal{A}$.
The first identity in (\ref{eq:TT-theorem}) states that the commutant is completely determined by the modular conjugation $J$.  The second one corresponds to the fact that the modular operator $\Delta$ gives rise, in a canonical way, to a time evolution under which $\pi_\omega (\mathcal A)$ remains invariant.

As discussed above, system $\mathbf A$ is described by $\pi_\omega (\mathcal A)$ as a subsystem of $\mathbf C$. Hence,  the commutant $\pi_\omega(\mathcal{A})'$ (which is also a subsystem of $\mathbf C$) describes the system $\mathbf B$ complementary to $\mathbf A$.

\emph{Gauge symmetry.}---The decomposition $\bigoplus_{r=1}^N \mathcal H^r$ of $\mathcal H_\omega$ described above is unique in the sense that it is completely determined by the orthogonal projectors $P^r\in \pi_\omega (\mathcal A)$. Each $\mathcal H^r$ can be further decomposed into irreducible subrepresentations with multiplicity $n_r$, but for $n_r>1$ this decomposition is highly non-unique. This gives rise to a non-abelian gauge symmetry (which becomes abelian for $n_r=1$), as we now describe.

Notice that each component $\mathcal A_r$ of $\mathcal A$, being isomorphic to a simple matrix algebra, has a system of \emph{matrix units}. These are elements $e^{(r)}_{ij}\in \mathcal A_r$ such that: (i) $e^{(r)}_{ij} e^{(s)}_{k\ell}=\delta_{rs}\delta_{jk}e^{(r)}_{i\ell}$, (ii) $e^{(r)*}_{ij}=e^{(r)}_{ji}$ and (iii) $\sum_{i=1}^{n_r} e^{(r)}_{ii}=\mathbbm{1}_{\mathcal{A}_r}$.
Let $g$ be any element of the group $G\equiv U_{\mathcal A}$ of unitary elements of $\mathcal A$ and define  projectors $p_g^{(r,k)}:= g\, e^{(r)}_{kk}\,g^*$, as well as  $\mathcal  H^{(r,k)}_g:= P_g^{(r,k)}\mathcal H_\omega$, where
\begin{equation}
\label{eq:P_g^k}
 P_g^{(r,k)} :=J \pi_\omega(p_g^{(r,k)})J \in \mathbf B.
\end{equation}
Since $P^r\in \pi_\omega(\mathcal A)$, due to  (\ref{eq:TT-theorem}) the unitaries $U(g):=J\pi_\omega(g) J$ commute with the sum $\sum_{k=1}^{n_r}P_g^{(r,k)}=P^r$, and we obtain
$\mathcal H^r= \bigoplus_{k=1}^{n_r}\mathcal H^{(r,k)}_g$. Furthermore, the $g$-dependent decompositions
%\begin{equation}
$\mathcal H_\omega =\bigoplus_{r=1}^N \bigoplus_{k=1}^{n_r}\mathcal H^{(r,k)}_g$
%\end{equation}
provide unitarily equivalent decompositions into irreducibles of the GNS representation $\pi_\omega$.
This  can be explicitly  shown by noticing that the
  vectors $\lbrace U(g)|e^{(r)}_{ik}\rangle\rbrace_{i=1,\ldots,n_r}$
span $\mathcal H^{(r,k)}_g$. A short computation using the definition of $\pi_\omega$ then shows that they transform under the action of $\pi_\omega (\mathcal A)$  in a  way that does not depend  on $g$, thus proving the statement.
The decomposition corresponding to $g=\mathbbm{1}_{\mathcal{A}}$ was used in~\cite{Balachandran2013,Balachandran2013a} in order to assign an entropy to the state $\omega$.  The equivalence of these decompositions then leads to $g$-dependent values of the entropy. This point was first raised in~\cite{Balachandran2013a}
and was extensively explored in \cite{Balachandran2013e,Balachandran2013d,Acharyya2014} to study the ensuing entropy ambiguities.
Here we have shown that the action of $G$ arises naturally in the context of modular theory.

There are several reasons that justify regarding $G$ as a \emph{gauge group} for $\mathcal A$. The fact that the operators $U(g)$ belong to system $\mathbf B$ means that their action will remain unnoticed, as far as system $\mathbf A$ is concerned. Furthermore, the examples discussed in~\cite{Balachandran2013d} show that for certain configuration spaces of molecular shapes, these transformations precisely correspond to actual gauge transformations on the fibre bundle where the molecular wave function is defined. But there is a further very compelling reason that appears as a result of theorem \ref{thm:1} below.
In order to appreciate its  meaning, we  emphasize  that in our current setting we have two algebras: (i) the algebra $\mathcal{F}$, which   is to be thought of as the algebra of fields in the sense of~\cite{Doplicher1969}, and (ii) the algebra $\mathcal{A}\cong\pi_\omega(\mathcal{A})$, that  plays the role of the observable algebra for system $\mathbf A$. Now,  in the theory of \emph{superselection sectors}, the gauge group plays a crucial role in selecting  the  observable algebra out of the field algebra~\cite{Doplicher1969,Doplicher1969a,Haag1996}. We claim
that in our case the gauge group coincides with  the group $G=U_{\mathcal A}$, acting on $\mathcal{H}_\omega$ via the representation $U(g)=J\pi_\omega(g)J$ for all $g\in G$.
First let us remark that
$U(G)$ is the set of unitary elements of $\pi_\omega(\mathcal{A})'$, i.e., $U(G)=U_{\pi_\omega(\mathcal{A})'}$. This follows from (\ref{eq:TT-theorem}) on using the facts that $J^2=\textrm{id}$ and that $\pi_\omega$ is faithful.
\begin{theorem}
\label{thm:1}
$\pi_\omega(\mathcal{A})=\mathcal{F}\cap U(G)'=U(G)'$
\end{theorem}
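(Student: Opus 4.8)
The plan is to reduce the statement to two standard structural facts: that the unitaries of a von Neumann algebra have the same commutant as the algebra itself, together with the bicommutant theorem. Before that, the second equality is essentially a triviality: by construction $U(G)'$ consists of operators in $\mathcal{L}(\mathcal{H}_\omega)=\mathcal{F}$ that commute with every $U(g)$, so $U(G)'\subseteq\mathcal{F}$ and hence $\mathcal{F}\cap U(G)'=U(G)'$. All of the content therefore lies in establishing $\pi_\omega(\mathcal{A})=U(G)'$.

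For this I would start from the remark proved just above the statement, $U(G)=U_{\pi_\omega(\mathcal{A})'}$, which identifies $U(G)$ with the group of unitary elements of the commutant $\pi_\omega(\mathcal{A})'$. The key step is the observation that in a unital finite-dimensional $C^*$-algebra $\mathcal{M}$ the unitaries linearly span the whole algebra: any self-adjoint $a$ with $\|a\|\leq 1$ can be written as $\tfrac{1}{2}(u+u^*)$ with $u:=a+i\sqrt{\mathbbm{1}-a^2}$ unitary, and a general element splits into its self-adjoint and anti-self-adjoint parts. It follows that an operator commutes with every unitary of $\mathcal{M}$ if and only if it commutes with all of $\mathcal{M}$, i.e.\ $U_\mathcal{M}'=\mathcal{M}'$. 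Applying this with $\mathcal{M}=\pi_\omega(\mathcal{A})'$ gives $U(G)'=U_{\pi_\omega(\mathcal{A})'}'=\bigl(\pi_\omega(\mathcal{A})'\bigr)'=\pi_\omega(\mathcal{A})''$.

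To close the chain I would invoke the bicommutant theorem. Since $\pi_\omega(\mathcal{A})$ is a unital finite-dimensional $*$-subalgebra of $\mathcal{L}(\mathcal{H}_\omega)$ it is automatically a von Neumann algebra, so $\pi_\omega(\mathcal{A})''=\pi_\omega(\mathcal{A})$ and therefore $U(G)'=\pi_\omega(\mathcal{A})$. The only point demanding genuine care---and hence the main, if mild, obstacle---is the spanning lemma that licenses replacing the commutant of the unitary group by the commutant of the full algebra. Everything else is routine manipulation of commutants, and because we work in finite dimensions no weak-closure or topological subtleties intervene.
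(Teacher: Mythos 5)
Your proof is correct and takes essentially the same route as the paper's: both arguments rest on the remark $U(G)=U_{\pi_\omega(\mathcal{A})'}$, the fact that the unitaries span a $C^*$-algebra (the paper cites the four-unitary decomposition, you prove it directly via $u=a+i\sqrt{\mathbbm{1}-a^2}$), and the von Neumann property $\pi_\omega(\mathcal{A})''=\pi_\omega(\mathcal{A})$. The only cosmetic difference is that you chain commutants in one stroke, $U(G)'=\bigl(\pi_\omega(\mathcal{A})'\bigr)'=\pi_\omega(\mathcal{A})$, whereas the paper verifies the two inclusions separately.
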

\begin{proof}---Note that trivially $U(G)'\subseteq\mathcal{F}$. Therefore $\mathcal{F}\cap U(G)'=U(G)'$. We thus only have to prove that $U(G)'=\pi_\omega(\mathcal{A})$.
Via the Tomita-Takesaki theorem, we have that $U(G)\subseteq\pi_\omega(\mathcal{A})'$. Since $\pi_\omega(\mathcal{A})$ is a von Neumann algebra, we have the defining property $\pi_\omega(\mathcal{A})=\pi_\omega(\mathcal{A})''$, from which $\pi_\omega(\mathcal{A})\subseteq U(G)'$ follows.
For the other inclusion note that proving $U(G)'\subseteq\pi_\omega(\mathcal{A})=\pi_\omega(\mathcal{A})''$ is equivalent to showing that for every $A\in U(G)'$ and $B\in \pi_\omega(\mathcal{A})'$ we have that $[A,B]=0$. As every element of a $C^*$-algebra is a linear combination of four unitary elements~\cite{Bratteli1997},
we have that $B=\sum_{i=1}^4 k_i U(g_i)$ for some $k_1,\dots,k_4\in\mathbb{C}$ and $g_1,\dots,g_4\in G$. Then, by definition of $U(G)'$ we have
$[A,B]=\sum_{i=1}^4 k_i[A,U(g_i)]=0$.
\end{proof}
The above result shows that $G$ is a gauge group in the sense that the observables $\mathcal{A}$ are precisely those operators in $\mathcal{F}$ which commute with the representation $U(G)$. This coincides with the characterization of the observable algebra introduced by Doplicher, Haag and Roberts as the gauge invariant part of the field algebra
(see, in particular, eqns. (1.9) and (3.3) in ~\cite{Doplicher1969}, as well as section I.3 in ~\cite{Connes1994a}).

\emph{Quantum operations from gauge symmetry.}---
By means  of the GNS construction we are modeling system $\mathbf A$   as a subsystem of $\mathbf C$. It makes sense, therefore, to consider the set of all states on $\mathbf C$ which, upon restriction to  $\mathcal A$, coincide with $\omega$. We will call such states \emph{extensions} of $\omega$.
In view of (\ref{eq:GNS-purification}), one such state is the canonical purification $\ket{\Omega}$ of $\omega$ afforded by the GNS construction. Extensions of $\omega$ display an interesting behavior with respect to a special class of quantum operations, that we now define.

 Let  $\lbrace \Lambda_k\rbrace_k$ be a family of operators such that (i) $\Lambda_k\in \pi_\omega(\mathcal A)'$ for all $k$ and (ii) $\sum_k \Lambda_k^*\Lambda_k =\mathbbm 1$, and define a quantum operation on system $\mathbf C$ through
$
\mathcal E_\Lambda (\rho):=\sum_k \Lambda_k\,\rho\,\Lambda_k^*.
$
Let now $\rho$ be a density operator on $\mathcal H_\omega$ representing an extension
of $\omega$. Then a short computation using (i), (ii) and the cyclicity of the trace shows that $\mathcal E_\Lambda (\rho)$ is  again an extension, i.e., $\Tr_{\mathcal H_\omega}(\mathcal E_\Lambda (\rho)\, \pi_\omega(a))= \Tr_{\mathcal H_\omega}(\rho\, \pi_\omega(a))=\omega (a)$ for all $a\in \mathcal A$.

Among this class of quantum operations we shall consider the projective measurements determined, for each $g\in G$, by the projectors $P_g^{(r,k)}$ defined in (\ref{eq:P_g^k}). They are directly related to the gauge symmetry, as discussed above. The quantum operation determined by the choice $\Lambda_{r,k} = P_g^{(r,k)}$ for the Kraus operators will be denoted by $\mathcal E_g$. Acting with this operation on $|\Omega\rangle$ we obtain the following family of density matrices on $\mathbf C$, parametrized by $G$:
\begin{equation}
\label{eq:rho_g}
\rho_g:= \mathcal E_{g}(|\Omega\rangle\langle \Omega|).
\end{equation}
The choice $g=\mathbbm{1}_{\mathcal A}$ gives a density operator $\mathcal E_{\mathbbm{1}_{\mathcal A}}(|\Omega\rangle\langle \Omega|)\equiv \rho_1$.  In the case $N=1$, this is precisely the one that was  used in~\cite{Balachandran2013} in order to compute entanglement entropies arising from restrictions.
The density operator $\rho_1$ arises there directly from the decomposition of $\mathcal H_\omega$ into irreducible subspaces. But, due to the gauge symmetry, the decomposition %$\mathcal{H}_\omega=\bigoplus_{k=1}^n\mathcal{H}^{(k)}_g$
 induced by (\ref{eq:P_g^k}) for any $g\in G$ is physically equivalent to the former, as far as system $\mathbf A$ is concerned. The corresponding decomposition of $|\Omega\rangle\langle \Omega|$ then gives $\rho_g$ as result. Herein lies the connection between gauge transformations and quantum operations.

Each of these density matrices has a different $g$-dependent (von Neumann) entropy. Indeed, one can easily  check that the non trivial eigenvectors of $\rho_g$ are $P_g^{(r,k)}|\Omega\rangle$, with eigenvalues
 $       \lambda_{r,k}(g):=\|P_g^{(r,k)}\ket{\Omega}\|^2.$
Therefore, we have
\begin{equation}\label{eq:entropy}
    S(\rho_g)=-\sum_{r=1}^N\sum_{k=1}^{n_r}\lambda_{r,k}(g)\log\lambda_{r,k}(g).
\end{equation}
The entropy ambiguity is then parametrized by the group $G$.

In what follows it will be essential to distinguish the Hilbert space trace (denoted ``$\Tr_{\mathcal H}$'') from the trace of an algebra $\mathcal A$, which will be written in lowercase, as follows: ``$\tr_{\mathcal A}$''.
We will
use the fact~\cite{OhyaPetz1993} that, for each given state $\omega$, there is a unique positive element $R\in \mathcal A$ such that $\omega (a)=\tr_{\mathcal A}(Ra)$. Notice in particular that
\begin{equation}
\hspace{-0.2cm}\tr_\mathcal{A}(a)=\sum_{r=1}^N\frac{1}{n_r}\Tr_{\mathcal H^r}(\pi_\omega(a))
                  =\tr_{\pi_\omega(\mathcal{A})’}(J\pi_\omega(a^*)J).
\end{equation}
\begin{lemma}
\label{lem:1}
Let $\rho$ be any density operator on system $\mathbf C$ such that $\rho|_{\mathbf A}\equiv \omega$ and $J \rho J=\rho$.
Then, the unique density operator on $\pi_\omega(\mathcal A)'$ implementing
$\mathcal E_g(\rho)|_{\mathbf B}$ is $\mathcal E_g(J \pi_\omega (R)J)$.
\end{lemma}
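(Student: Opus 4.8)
The plan is to verify directly that the candidate $\mathcal E_g(J\pi_\omega(R)J)$ reproduces the functional $\mathcal E_g(\rho)|_{\mathbf B}$ through the algebra trace on $\pi_\omega(\mathcal A)'$, and to invoke non-degeneracy of that trace for uniqueness. First I would record the preliminary facts that make the statement well posed. Since each Kraus operator $P_g^{(r,k)}$ lies in $\pi_\omega(\mathcal A)'$, the operation $\mathcal E_g$ maps $\pi_\omega(\mathcal A)'$ into itself, so $\mathcal E_g(J\pi_\omega(R)J)$ is a legitimate element of $\mathbf B$; it is moreover positive, being the image under a completely positive map of $J\pi_\omega(R)J\geq 0$. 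Because $\tr_{\pi_\omega(\mathcal A)'}$ is a faithful trace on the matrix algebra $\pi_\omega(\mathcal A)'$, at most one element of $\mathbf B$ can implement a given functional, which settles uniqueness (and forces the correct normalization). It then remains to prove, for every $B\in\pi_\omega(\mathcal A)'$, the identity $\Tr_{\mathcal H_\omega}(\mathcal E_g(\rho)\,B)=\tr_{\pi_\omega(\mathcal A)'}(\mathcal E_g(J\pi_\omega(R)J)\,B)$.

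Second, I would strip off the operation $\mathcal E_g$ on both sides. Because the $P_g^{(r,k)}$ are self-adjoint projectors, $\mathcal E_g$ is its own dual, and both the Hilbert-space trace and the algebra trace are cyclic; moving the projectors around $B$ turns the left side into $\Tr_{\mathcal H_\omega}(\rho\,\mathcal E_g(B))$ and the right side into $\tr_{\pi_\omega(\mathcal A)'}(J\pi_\omega(R)J\,\mathcal E_g(B))$. Since $\mathcal E_g(B)$ again lies in $\pi_\omega(\mathcal A)'$, the whole claim reduces to the $g$-free statement that $J\pi_\omega(R)J$ is the density operator implementing $\rho|_{\mathbf B}$, i.e. $\Tr_{\mathcal H_\omega}(\rho\,B)=\tr_{\pi_\omega(\mathcal A)'}(J\pi_\omega(R)J\,B)$ for all $B\in\pi_\omega(\mathcal A)'$.

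Third, I would prove this reduced identity by parametrizing $B=J\pi_\omega(b)J$, which is possible and unique by the Tomita-Takesaki relation $J\pi_\omega(\mathcal A)J=\pi_\omega(\mathcal A)'$ together with faithfulness of $\pi_\omega$. On the right, $J^2=\mathrm{id}$ gives $J\pi_\omega(R)J\,B=J\pi_\omega(Rb)J$, and the trace formula $\tr_{\mathcal A}(a)=\tr_{\pi_\omega(\mathcal A)'}(J\pi_\omega(a^*)J)$ with $a=b^*R$, together with $R=R^*$ and the defining property $\omega(a)=\tr_{\mathcal A}(Ra)$, collapses the right side to $\tr_{\mathcal A}(Rb^*)=\omega(b^*)$. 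On the left, I would use the hypothesis $J\rho J=\rho$ to write $\Tr_{\mathcal H_\omega}(\rho\,J\pi_\omega(b)J)=\Tr_{\mathcal H_\omega}(J\rho\,\pi_\omega(b)J)$, then apply the antiunitary trace identity $\Tr_{\mathcal H_\omega}(JXJ)=\overline{\Tr_{\mathcal H_\omega}(X)}$ and the hypothesis $\rho|_{\mathbf A}=\omega$ to obtain $\overline{\omega(b)}$. The two sides then agree because $\omega$ is a Hermitian functional, $\omega(b^*)=\overline{\omega(b)}$.

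The main obstacle I anticipate is the bookkeeping forced by the antilinearity of $J$: one must keep the two distinct traces apart, establish $\Tr_{\mathcal H_\omega}(JXJ)=\overline{\Tr_{\mathcal H_\omega}(X)}$ for antiunitary $J$ (expanding in an orthonormal basis and using $\langle J\phi|J\psi\rangle=\overline{\langle\phi|\psi\rangle}$ together with $J^2=\mathrm{id}$), and make sure that the complex conjugation this introduces on the left is exactly cancelled by the Hermiticity relation $\omega(b^*)=\overline{\omega(b)}$ on the right. Everything else is cyclicity of the two traces and the already-established Tomita-Takesaki identities.
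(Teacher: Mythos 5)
Your proof is correct, and it reaches the identity through a different organization than the paper's. The paper keeps the operation $\mathcal{E}_g$ in place throughout: after parametrizing $B=J\pi_\omega(a)J$, it evaluates both sides of the defining identity directly and shows that each equals $\sum_{r,k}\omega\bigl(p_g^{(r,k)}a^*p_g^{(r,k)}\bigr)$ --- the left side via the antilinear-trace identity $\Tr(ST)=\Tr(S^*T^*)$ together with $J\rho J=\rho$ and $\rho|_{\mathbf A}=\omega$, the right side via the transfer formula $\tr_{\mathcal A}(a)=\tr_{\pi_\omega(\mathcal{A})'}(J\pi_\omega(a^*)J)$ (justified there by uniqueness of the trace on finite algebras) and $\omega(a)=\tr_{\mathcal A}(Ra)$. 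You instead first strip $\mathcal{E}_g$ off both sides, using that a projective measurement with self-adjoint Kraus operators in $\pi_\omega(\mathcal{A})'$ is self-dual with respect to both the Hilbert-space trace and the algebra trace, which reduces the lemma to the $g$-free statement that $J\pi_\omega(R)J$ implements $\rho|_{\mathbf B}$; that reduced statement you then prove with the same two ingredients (Tomita--Takesaki parametrization and the trace-transfer formula), isolating the antilinearity bookkeeping in $\Tr(JXJ)=\overline{\Tr(X)}$ and closing with Hermiticity, $\omega(b^*)=\overline{\omega(b)}$. The ingredient identities are identical, but your factorization makes explicit three things the paper leaves implicit: uniqueness, which the paper asserts in the statement but never argues, here follows from faithfulness of the trace on $\pi_\omega(\mathcal{A})'$; the hypotheses $J\rho J=\rho$ and $\rho|_{\mathbf A}=\omega$ are seen to enter only through the measurement-free reduced identity; and the argument generalizes verbatim from projective measurements to any operation $\mathcal{E}_\Lambda$ with Kraus operators in the commutant, upon replacing self-duality by the dual map $B\mapsto\sum_k\Lambda_k^*B\Lambda_k$. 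What the paper's direct route buys in exchange is the explicit common value $\sum_{r,k}\omega\bigl(p_g^{(r,k)}a^*p_g^{(r,k)}\bigr)$ of both sides, i.e.\ a closed formula for the restricted state, which is convenient for the subsequent entropy computations and the bipartite example.
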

\begin{proof}---The claim is that the density operator $\mathcal E_g(\rho)\in\mathcal L(\mathcal H_\omega)$ (which defines a state on $\mathbf C$), when restricted to $\pi_\omega(\mathcal A)'$ (system $\mathbf B$), coincides with $\mathcal E_g(J \pi_\omega (R)J)$, in the sense that for all $B\in \pi_\omega(\mathcal A)'$ the following identity holds:
\begin{equation}
\label{eq:identity-thm2}
\Tr_{\mathcal H_\omega}(\mathcal E_g(\rho)B)=\tr_{\pi({\mathcal A})'}\left(\mathcal E_g(J\pi_\omega (R)J) B\right).
\end{equation}
By modular theory,  $B=J\pi_\omega(a)J$, for some $a\in \mathcal A$. Using (\ref{eq:P_g^k}), $J^2=\textrm{id}$ and the fact that  $\Tr (ST) = \Tr(S^*T^*)$ whenever both $S$ and $T$ are antilinear operators\cite{Uhlmann2016}, we obtain
$\Tr_{\mathcal H_\omega}(\mathcal E_g(\rho)J\pi_\omega(a)J)=\sum_{r,k}\omega (p_g^{(r,k)} a^*p_g^{(r,k)})$. To compute the RHS of (\ref{eq:identity-thm2}) we note that
the linear map $a\mapsto \tr_{\pi({\mathcal A})'}(J\pi_\omega (a^*)J)$ defines a \emph{trace} on $\mathcal A$. It then follows, from uniqueness of the trace for finite algebras, that
$
\tr_{\pi({\mathcal A})'}(J\pi_\omega (a^*)J)\equiv \tr_{\mathcal A}(a).
$
Using this result, together with $\omega(a)=\tr_{\mathcal A}(Ra)$, (\ref{eq:P_g^k}) and $J^2=\mathrm{id}$, we obtain  (\ref{eq:identity-thm2}).
\end{proof}
\begin{theorem}
\label{thm:2} $S(\rho_g)\ge S(\rho_1)$ for all $g\in U_{\mathcal A}$.
\end{theorem}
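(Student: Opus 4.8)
The plan is to recast Theorem~\ref{thm:2} as a majorization statement about the eigenvalue distributions $\{\lambda_{r,k}(g)\}$ and then apply the Schur-concavity of the Shannon entropy. First I would make the eigenvalues fully explicit. Since $S\ket{\Omega}=\ket{\mathbbm{1}_{\mathcal A}^{*}}=\ket{\Omega}$ and $\Delta^{1/2}\ket{\Omega}=\ket{\Omega}$, the polar decomposition $S=J\Delta^{1/2}$ forces $J\ket{\Omega}=\ket{\Omega}$. Combining this with $J^2=\mathrm{id}$, the antiunitarity of $J$, and the fact that each $p_g^{(r,k)}=g\,e^{(r)}_{kk}\,g^{*}$ is a self-adjoint projector, I obtain
\[
\lambda_{r,k}(g)=\|P_g^{(r,k)}\ket{\Omega}\|^{2}=\|\ket{p_g^{(r,k)}}\|^{2}=\omega\!\left(p_g^{(r,k)}\right)=\tr_{\mathcal A}\!\left(R\,g\,e^{(r)}_{kk}\,g^{*}\right).
\]
Writing $R=\bigoplus_{r}R_r$ and $g=\bigoplus_{r}g_r$ in the block decomposition $\mathcal A=\bigoplus_r\mathcal A_r$ and using cyclicity of the matrix trace, this reduces to $\lambda_{r,k}(g)=(g_r^{*}R_rg_r)_{kk}$: the $k$-th diagonal entry of the Hermitian matrix $g_r^{*}R_rg_r$, which is unitarily equivalent to the block $R_r$ of the density matrix of $\omega$.

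With the eigenvalues in this form the statement becomes combinatorial. The entropy~(\ref{eq:entropy}) splits blockwise as $S(\rho_g)=\sum_{r}\phi\!\left(\lambda_{r,\cdot}(g)\right)$ with $\phi(v)=-\sum_k v_k\log v_k$, and within each block the total weight $\sum_k\lambda_{r,k}(g)=\tr_{\mathcal A}(R_r)$ is independent of $g$; hence it is enough to argue block by block. Here I would invoke Schur's theorem: the diagonal of a Hermitian matrix is majorized by its spectrum, so $\lambda_{r,\cdot}(g)=\mathrm{diag}(g_r^{*}R_rg_r)\prec\mathrm{spec}(R_r)$. The crucial observation is that the reference decomposition $g=\mathbbm{1}_{\mathcal A}$ is the one adapted to the eigenbasis of the state — the matrix units being chosen so that $R$ is diagonal, as in the decomposition used in \cite{Balachandran2013} — so that $\lambda_{r,\cdot}(\mathbbm{1}_{\mathcal A})=\mathrm{spec}(R_r)$ and therefore $\lambda_{r,\cdot}(g)\prec\lambda_{r,\cdot}(\mathbbm{1}_{\mathcal A})$.

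To conclude, I would use that $\phi$ is symmetric and concave, hence Schur-concave on vectors of a fixed sum: the majorization $\lambda_{r,\cdot}(g)\prec\lambda_{r,\cdot}(\mathbbm{1}_{\mathcal A})$ then gives $\phi(\lambda_{r,\cdot}(g))\ge\phi(\lambda_{r,\cdot}(\mathbbm{1}_{\mathcal A}))$ for every $r$, and summing over $r$ yields $S(\rho_g)\ge S(\rho_1)$.

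I expect the step demanding the most care to be the reduction of the first paragraph, where the modular data ($J\ket{\Omega}=\ket{\Omega}$ and antiunitarity) must be combined with the trace representation $\omega=\tr_{\mathcal A}(R\,\cdot)$ to pin $\lambda_{r,k}(g)$ to the diagonal of $g_r^{*}R_rg_r$. The genuinely load-bearing point, however, is conceptual rather than computational: one must recognize that $g=\mathbbm{1}_{\mathcal A}$ corresponds to the eigenbasis of $R$, so that $S(\rho_1)$ equals the von Neumann entropy of $\omega$ and sits at the bottom of the majorization order. Were the reference decomposition not adapted to the state, the minimum would be attained at a different $g$ and the inequality would fail; this is precisely the physical content of the result. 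Once these points are settled, Schur's majorization theorem and the Schur-concavity of the entropy are standard and finish the argument.
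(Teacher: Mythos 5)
Your proposal is correct, but it takes a genuinely different route from the paper. The paper's proof never touches the eigenvalues directly: it routes the argument through system $\mathbf{B}$, using Lemma~\ref{lem:1} (with $J\ket{\Omega}=\ket{\Omega}$) to identify $\rho_g|_{\mathbf B}$ with $\mathcal E_g(J\pi_\omega(R)J)$, then invokes the standard fact that a projective measurement (pinching) does not decrease von Neumann entropy to get $S(\rho_g|_{\mathbf B})\ge S(R)$, and finally closes the loop with two matrix-unit computations showing $S(\rho_1)=S(R)$ and $S(\rho_g|_{\mathbf B})=S(\rho_g)$. You instead compute the spectrum of $\rho_g$ explicitly --- your reduction $\lambda_{r,k}(g)=\omega\bigl(p_g^{(r,k)}\bigr)=(g_r^*R_rg_r)_{kk}$ via $J\ket{\Omega}=\ket{\Omega}$ and antiunitarity is correct, and the block trace $\tr_{\mathcal A}$ is indeed the ordinary matrix trace on each factor --- and then apply Schur's theorem (diagonal majorized by spectrum) together with Schur-concavity of the Shannon entropy. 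In effect you have unpacked the pinching inequality that the paper uses as a black box, replacing it by classical majorization applied directly to $R$, which lets you bypass Lemma~\ref{lem:1} and the restriction-to-$\mathbf{B}$ bookkeeping entirely. Both proofs hinge on the same non-obvious choice, which you correctly flag as the load-bearing point: the matrix units $e^{(r)}_{ij}$ must be built from the spectral projectors of $R$, so that $\lambda_{r,\cdot}(\mathbbm{1}_{\mathcal A})=\mathrm{spec}(R_r)$ and $S(\rho_1)=S(R)$; without this adaptation the inequality genuinely fails, exactly as you say. What each approach buys: yours is more elementary and self-contained (pure finite-dimensional linear algebra), exhibits the spectrum explicitly, and makes the equality case transparent ($S(\rho_g)=S(\rho_1)$ iff each $g_r^*R_rg_r$ is diagonal); the paper's, by contrast, produces as by-products precisely the statements it needs immediately afterwards --- $S(\rho_g|_{\mathbf A})=S(\rho_1)$, $S(\rho_g|_{\mathbf B})=S(\rho_g)$ (so the entire ambiguity sits in $\mathbf{B}$), and the relative-entropy expression for $\Delta S$ --- which your route would have to re-derive separately.
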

\begin{proof}---Since $J|\Omega\rangle=|\Omega\rangle$, it follows from lemma \ref{lem:1} and the definition in (\ref{eq:rho_g}) that $S(\rho_g|_{\mathbf B})=S(\mathcal E_g(J\pi_\omega (R)J))$. Since $\mathcal E_g$ is a projective measurement, it follows that $S(\rho_g|_{\mathbf B})\ge S(J\pi_\omega (R)J)$ $=S(R)=S(\rho_1)$. Moreover, using the projectors  furnished by the spectral theorem when applied to $R$ to construct the matrix units $e^{(r)}_{ij}$, one can check that $\rho_1$ and $R$ have exactly the same  entropy.
The result then follows from an explicit computation, using again the matrix units $e_{ij}^{(r)}$, which shows that  restriction of  $\rho_g$ to $\mathbf B$ does not change entropy, i.e., $S(\rho_g|_{\mathbf B})=S(\rho_g)$.
\end{proof}
As a consequence of this result we obtain the identity
$S(\rho_g|_{\mathbf{A}}) =
S(\rho_1)$, as well as $S(\rho_g|_{\mathbf{B}})=
S(\rho_g)$. We therefore see that the whole ambiguity in the entropy is carried by system $\mathbf B$.
The transformations induced by $g\in G$ are gauge transformations for system $\mathbf A$ and, from the point of view of $\mathbf A$, do not change the entropy. This provides further support to the idea proposed in \cite{Balachandran2013a} of using the GNS approach to study entanglement in systems of identical particles.
Notice also that the $g$-induced increase in entropy can be expressed as a relative entropy:
\begin{equation}
\Delta S \equiv S(\rho_g)-S(\rho_1)= S(\mathcal E_g(J\pi_\omega(R)J)||J\pi_\omega(R)J).
\end{equation}
\emph{Example: Bipartite system.}---
Let  $\mathcal A=M_n(\mathbb C)$, with matrix units $e_{ij}$. Put $\omega (a) =\tr (R a)$, $R=\sum_i \lambda_i e_{ii}$,  with $R$ invertible. The vectors $|\hat e_{ij}\rangle :=(\lambda_j)^{-1/2}|e_{ij}\rangle$ provide an orthonormal basis  for $\mathcal H_\omega$. It follows that the  linear map
$\Phi: \mathcal H_\omega\rightarrow \mathbb C^n \otimes \mathbb C^n$ defined  through $\Phi(|\hat e_{ij}\rangle):=|i\rangle\otimes |j\rangle$ is a Hilbert space isomorphism. Therefore, to any (anti-)linear  operator $T\in \mathcal L(\mathcal H_\omega)$ there is a corresponding operator $\widetilde T:=\Phi T \Phi^{-1}\in \mathcal L(\mathbb C^n \otimes \mathbb C^n)$.
Explicit computation then leads to $\widetilde J  \ket{i} \otimes \ket{j} = \ket{j} \otimes \ket{i}$, as well as to $\widetilde\pi_\omega(a)=a\otimes \mathbbm 1_n$
and $\widetilde J\widetilde\pi_\omega(a)\widetilde J=\mathbbm 1_n \otimes \bar{a}$,  $a\in \mathcal A$, where $\bar a$ is the complex conjugate of $a$.
We thus see that in this case the notion of complementary subsystems described above in terms of modular theory reduces to the usual  tensor product decomposition of a bipartite system.
Under the isomorphism, the  algebra of system $\mathbf{C}$ is $\Phi\mathcal{F}\Phi^{-1}=\mathcal{A} \otimes \mathcal{A}$ while that of  subsystem $\mathbf{A}$ is $\mathcal{A} \otimes\mathbbm{1}_n$. One then identifies  system $\mathbf{B}$ with the algebra $\mathbbm{1}_n\otimes \mathcal{A}$, such that $\mathbf{C}$ is precisely the composite system of $\mathbf{A}$ and $\mathbf{B}$.
The power of modular theory lies in the fact that it will produce a bipartition in more general situations, where tensor products might not be suitable (or even unavailable~\cite{Schroer2010}). Let now $g\in G\equiv U(n)$ and define $\lambda_k(g):=\sum_i \lambda_i|g_{ik}|^2$. For $\rho_g$ defined as in (\ref{eq:rho_g}) we find
$\widetilde \rho_g|_{\textbf B}=\sum_k \lambda_k(g)\,  \bar g|k\rangle\langle k| \bar{g}^*$, as well as $\widetilde{\mathcal E}_g(\widetilde J\widetilde \pi_\omega(R)\widetilde J)=\mathbbm 1_n\otimes (\widetilde \rho_g|_{\textbf B})$, in accordance with the statement of lemma \ref{lem:1}. Finally, the spectrum of $\rho_g$ is given by $\lbrace \lambda_k(g) \rbrace_{k=1,\ldots,n}$, whereas the spectrum of $\rho_1$ coincides (up to irrelevant zero eigenvalues) with that of $R$, in accordance with the statement of theorem \ref{thm:2}.

\emph{Conclusions}---We have obtained a novel duality relation between gauge symmetries and quantum operations which has its roots in the modular theory of Tomita-Takesaki. The properties of this emergent gauge symmetry have been worked out in full generality in the finite dimensional case. Our main results are: (i) the construction of a canonical embedding and purification of a quantum system by means of the GNS construction and the identification of a subsystem decomposition using modular theory, (ii) the identification of a gauge symmetry in the sense of Doplicher-Haag-Roberts and  (iii) the construction of a family of entropy-increasing quantum operations induced by gauge transformations, which nevertheless leave invariant the original system.
It would be interesting to explore the consequences of our results
for matrix models of gauge theories and, in particular, to the idea that color in QCD is mixed~\cite{Balachandran2015a, Balachandran2015b}. Our proposal can also be implemented in more general situations, such as systems arising from quantization of homogeneous spaces~\cite{Landsman1990}. Previous work on anomalies from a Hamiltonian point of view~\cite{Esteve1986,Gracia-Bondia1994,Balachandran2012} strongly suggests that there should be a connection between anomalies and the type of quantum operations considered here. We hope to return to this problem in the near future.

The authors would like to thank Aleksandr Pinzul for discussions that led to a significant improvement of this work. Financial support from Universidad de los Andes through projects INV-2017-51-1444,   INV-2017-26-1094 and INV-2018-34-1295  is gratefully acknowledged. I.M. Burbano thanks  ICTP-SAIFR, ICTP-Trieste and FAPESP grant 2016/03143-7 for partial financial support.
%\bibliography{C:/Users/AndresReyes/Dropbox/ANDRES/REFERENCIAS/Master}
%\end{document}
%

\end{document}